\newtheorem{Th}{Theorem}
\newtheorem{Lem}[Th]{Lemma}
\theoremstyle{remark}
\newtheorem{R}{Remark}
\newtheorem{Cor}[Th]{Corollary}
\theoremstyle{definition}
\newtheorem*{Ex}{Example}
\def\tr{\textnormal{tr}}
\begin{document}

\title{Pure states of maximum uncertainty with respect to a given POVM}
\author[1]{Anna Szymusiak \thanks{anna.szymusiak@uj.edu.pl}}
\affil[1]{Institute of Mathematics, Jagiellonian University, \L ojasiewicza 6, 30-348 Krak\'ow, Poland}
\date{}

\maketitle
\begin{abstract}
One of the differences between classical and quantum world is that in the former we can always perform a measurement that gives certain outcomes for all pure states, while such a~situation is not possible in the latter. The degree of randomness of the distribution of the measurement outcomes can be quantified by the Shannon entropy. While it is well known that this entropy, as a function of quantum states, needs to be minimized by some pure states, we would like to address the question how `badly' can we end by choosing initially any pure state, i.e., which pure states produce the maximal amount of uncertainty under given measurement. We find these maximizers for all highly symmetric POVMs in dimension 2, and for all SIC-POVMs in any dimension. 
\end{abstract}

One of the differences between classical and quantum world is that classically we can always perform a measurement that gives certain outcomes for all pure states, while such a situation is not possible in a quantum setup. In this paper we are interested in the question to what extend can the outcomes be uncertain for fixed quantum measurement, if it was performed on the pure state. It is natural to think of measuring uncertainty by quantifying the randomness of the probability distribution of the measurement outcomes, e.g. by calculating its Shannon entropy. In this context, the maximizers of this entropy can be described as the states of maximal uncertainty and the minimizers as these of minimal uncertainty.

The problem of minimization of the Shannon entropy of a quantum measurement, mathematical description of which is given by the positive operator-valued measure (POVM), has been already widely considered, in particular in the context of the informational power of POVM, and it has been solved for all highly symmetric POVMs in dimension two, i.e., seven sporadic measurements \cite{SloSzy16}, including the `tetrahedral' SIC-POVM \cite{DAretal11,Oreetal11} and one infinite series \cite{SloSzy16,DAretal11}, all SIC-POVMs in dimension three \cite{Szy14}, the POVM consisting of four MUBs in dimension three \cite{DAr14} and the Hoggar SIC-POVM in dimension eight \cite{SzySlo16}. In this paper we focus on the opposite problem, i.e., the one regarding maximization of the Shannon entropy of POVM over the pure states. We solve it for all highly symmetric POVMs in dimension 2, and, quite surprisingly, for all SIC-POVMs (in any dimension, assuming that such objects exist). While the  latter result turns out to be easily solvable, we have not been able to find it published anywhere and we are convinced it deserves to be noticed as it seems to be of some importance for at least two reasons. Firstly, SIC-POVMs exhibit properties that make them applicable in quantum state tomography \cite{Sco06,ZhuEng11,JiaPhd,Benetal15}, quantum cryptography \cite{Ren05}, quantum communication \cite{Oreetal11} or entanglement detection \cite{Cheetal15}, they are also essential component of quantum Bayesianism \cite{Cavetal02}. Secondly, there are only a few universal results true for all SIC-POVMs, in contrast to many open problems, starting with the crucial question whether at all they exist in all dimensions.

The maximization of the Shannon entropy of POVM is also closely related to the entropic \emph{certainty} relations \cite{San95,Pucetal15}. In fact, such a relation can be derived from any upper bound on the entropy of POVM  whenever this POVM can be decomposed (up to a scalar factor) into $m$ POVMs consisting of the same number of elements.

\section{Entropy of POVM}

Let us consider a finite-dimensional quantum system with the associated Hilbert space $\mathbb C^d$ and denote by $\mathcal S(\mathbb C^d)$ the set of all quantum states (density matrices), and by $\mathcal P(\mathbb C^d)$ the set of pure states. The mathematical description of a general quantum measurement is given by a \emph{positive operator valued measure} (POVM). In a finite case, it is enough to say that a POVM is a set $\Pi=\{\Pi_j\}_{j=1}^k$ of nonzero positive semidefinite operators on $\mathbb C^d$ (called \emph{effects}) satisfying the identity decomposition $\sum_{j=1}^k\Pi_j=\mathbb I_d$. While the indices $j=1,\ldots,k$ label the outcomes, the probability of obtaining the $j$-th outcome is given by $p_j(\rho,\Pi):=\tr(\rho\Pi_j)$, providing the system before measurement was in the state $\rho\in\mathcal S(\mathbb C^d)$. 

We say that a POVM is \emph{rank-$1$} and \emph{normalized} if its elements are rescaled one-dimensional projectors with $\tr(\Pi_j)=d/k$ for $j=1,\ldots,k$. With any such measurement we can associate the set of pure states $\rho_j:=(k/d)\Pi_j$ ($j=1,\ldots,k$), which allows us to think of a representation of the POVM on the (generalized) Bloch sphere. A special case of such POVMs are \emph{projective-valued measures} (PVMs), where $k=d$ and so $\Pi$ consists of rank-1 orthogonal projections onto an orthonormal basis. We call a POVM \emph{informationally complete} (and denote IC-POVM) if its statistics determine uniquely the initial state, i.e., the conditions $\tr(\rho\Pi_j)=\tr(\sigma\Pi_j)$ for $j=1,\ldots,k$ imply $\rho=\sigma$ for any $\rho,\sigma\in\mathcal S(\mathbb C^d)$. A rank-1 normalized POVM consisting of $d^2$ effects is called \emph{symmetric informationally complete} POVM (SIC-POVM) if $\tr(\Pi_i\Pi_j)=1/(d^2(d+1))$ for all \mbox{$i\neq j$}. Note that SIC-POVMs are indeed informationally complete. We distinguish also a class of POVMs that exhibit an especially high level of symmetry and call them  \emph{highly symmetric} POVMs (HS-POVMs) \cite{SloSzy16}. In two-dimensional case all HS-POVMs can be nicely characterized by the configurations of the associated pure states on the Bloch sphere. Namely, there are five HS-POVMs represented by the vertices of Platonic solids (tetrahedron, cube, octahedron, icosahedron and dodecahedron) and two represented by quasiregular Archimedean solids: cuboctahedron and icosidodecahedron, all of which are informationally complete. There is also an infinite series of non-informationally complete HS-POVMs, represented by the vertices of regular equatorial polygons (including digon). For more information see \cite{SloSzy16}. 

In order to quantify the randomness of the obtained probability distribution (and so the uncertainty of the measurement outcomes) we define the \emph{entropy of the measurement} $\Pi$, namely the function $H(\cdot,\Pi):\mathcal S(\mathbb C^d)\to\mathbb R$ which evaluates the Shannon entropy of the probability distribution of measurement outcomes for the initial state $\rho$, i.e.,
$$H(\rho,\Pi):=\sum_{j=1}^k\eta(p_j(\rho,\Pi)),$$
where $\eta(x)=-x\ln x$ for $x>0$ and $\eta(0):=0$. 

The entropy of the measurement $H(\cdot,\Pi)$ is continuous and concave, and so  its minima occur at extreme points of $\mathcal S(\mathbb C^d)$, i.e., the pure states. Obviously, this quantity is upper bounded by $\ln k$, the entropy of the uniform distribution. This upper bound is achieved in the maximally mixed state $\rho_*:=(1/d)\mathbb I_d$ whenever $\Pi$ consists of operators of equal trace, i.e., $\tr(\Pi_j)=d/k$ for $j=1,\ldots,k$, since then $\tr(\rho_*\Pi_j)=(1/d)\tr(\Pi_j)=(1/d)(d/k)=1/k$, and so the measurement outcomes are uniformly distributed. In particular, it is satisfied by rank-1 normalized POVMs.  However, if we consider  the entropy of the measurement restricted to the pure states, the question of which pure states maximize the entropy of the measurement and how large it can be is not so trivial. What is the meaning of this question? Let us observe that, since the entropy is minimized on the set of pure states, one can ask how `badly' we can end up by choosing initially any pure state.

Note also that instead of entropy we can analyse the \emph{relative entropy  of the measurement} (with respect to the uniform distribution), i.e., $$\widetilde{H}(\cdot,\Pi):=\ln k -H(\cdot,\Pi).$$ Obviously, the maximization of the entropy $H$ over pure states is equivalent to the minimization of the relative entropy $\widetilde H$ over pure states, the latter being lower bounded by 0. It seems it makes no difference which of these two functions we take into consideration. However, the relative entropy can be a more convenient tool since its average value over pure states depends only on the dimension $d$ (see, e.g.\ \cite{SloSzy16}) and so it enables us  to compare, in some sense, the indeterminacy of different  measurements on the same quantum system. 

An obvious question that arises here is whether we can compute exactly the maximum entropy of the measurement, i.e., the minimum relative entropy $\widetilde H$, for at least some classes of POVMs and determine the maximizers (minimizers of $\widetilde{H}$). 
Let us start with the following simple observations.

\begin{R}\label{fact}\hfill
	\begin{enumerate}[I.]
		
		\item If $\Pi$ is a PVM, then $\min_{\rho\in\mathcal P(\mathbb C^d)} \widetilde H(\rho,\Pi)=0$, ($\max_{\rho\in\mathcal P(\mathbb C^d)}  H(\rho,\Pi)=\ln d$)  and the set of minimizers (maximizers) is a $(d-1)$-torus.
		\item \label{fact2} If a rank-1 normalized POVM $\Pi=\{\Pi_j\}_{j=1}^k$ is informationally complete, then 
		\begin{equation}
		\min_{\rho\in\mathcal P(\mathbb C^d)} \widetilde H(\rho,\Pi)>0,\quad (\max_{\rho\in\mathcal P(\mathbb C^d)}  H(\rho,\Pi)<\ln k).
		\end{equation}
		Moreover, if $d=2$, the converse is also true.
	\end{enumerate}
\end{R}

\begin{proof}\hfill
	
	\noindent I. If $\Pi$ is a PVM, i.e., $\Pi:=\{|e_j\rangle\langle e_j|\}_{j=1}^d$, where $\{|e_j\rangle\}_{j=1}^d$ is an orthonormal basis in $\mathbb C^d$, then the measurement outcomes are uniformly distributed if and only if the initial state $|\psi\rangle\langle\psi|$ is of the form $|\psi\rangle=(1/\sqrt d)\sum_{j=1}^de^{i\theta_j}|e_j\rangle$, where $\theta_j\in\nolinebreak\mathbb R$. Thus, the minimizers of $\tilde{H}$ form a $(d-1)$-torus.
	
	\noindent II. If $\widetilde H(\rho,\Pi)=0$, then $H(\rho,\Pi)=\ln k$, and so the probability distribution of the measurement outcomes is uniform. By the informational completeness of $\Pi$ we get  $\rho=\rho_*$, and so $\min_{\rho\in\mathcal P(\mathbb C^d)} \widetilde H(\rho,\Pi)>0$.
	
	Now let $d=2$. To see the converse, let us assume that $\Pi$ is not informationally complete. Notice that the informational completeness of a rank-1 normalized POVM $\Pi=\{\Pi_j\}_{j=1}^k$ is equivalent to the condition that the operators $\Pi_j-(1/k)\mathbb I_d$ span the real (($d^2-1$)-dimensional) space of traceless selfadjoint operators, see \cite[Prop.\ 3.51]{HeiZim}.  Then, by the fact that $\mathcal P(\mathbb C^2)$ is a sphere centered at $\rho_*$ in the affine subspace of trace-one selfadjoint operators (the Bloch sphere), there exists  $\sigma\in \mathcal P(\mathbb C^2)$ such that $\sigma-(1/d)\mathbb I_d $ is orthogonal to $\Pi_j-(1/k)\mathbb I_d$ for  $j=1,\ldots,k$, i.e., $\tr((\sigma-(1/d)\mathbb I_d)(\Pi_j-(1/k)\mathbb I_d))=0$ for all $j$. In consequence, $\tr(\sigma\Pi_j)=1/k$ for all $j$ and so $\widetilde H(\sigma,\Pi)=0$.
\end{proof}

\begin{Ex}
	In order to see that  the converse of (\ref{fact2}) need not necessarily be true in higher dimensions one can consider some  \emph{strongly unextendible} sets of mutually unbiased bases (MUBs), i.e., the ones for which there does not exist even a single vector unbiased with respect to these bases. In consequence, such a set generates a measurement which is not informationally complete, and there exist no pure state providing a uniform distribution of the measurement outcomes. For a specific example, see e.g. \cite[Thm 2]{Gra04}.
\end{Ex}

The following corollary is a straightforward consequence of Remark \ref{fact}.
\begin{Cor}
	For any two-dimensional HS-POVM $\Pi$ represented by a regular equatorial $n$-gon ($n>2$) on the Bloch sphere we have $\min_{\rho\in\mathcal P(\mathbb C^d)}\widetilde{H}(\rho,\Pi)=0$, ($\max_{\rho\in\mathcal P(\mathbb C^d)} H(\rho,\Pi)=\ln n$ ) and the minimum (maximum) is achieved at the poles.
\end{Cor}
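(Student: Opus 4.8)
The plan is to recognise that an equatorial $n$-gon POVM satisfies exactly the hypothesis of the converse direction in Remark \ref{fact}.\ref{fact2} --- that it is a rank-$1$ normalized POVM which is \emph{not} informationally complete --- and then to pin down the maximizers by a short direct computation rather than by merely quoting the existence statement.

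First I would fix the Bloch picture. Such a $\Pi=\{\Pi_j\}_{j=1}^n$ is, by definition, rank-$1$ normalized with $k=n$ effects in $d=2$, and its associated pure states $\rho_j=(2/n)\Pi_j$ correspond to unit Bloch vectors $\vec s_j$ equally spaced on the equator, hence all lying in the two-dimensional equatorial plane. The key structural observation is then immediate: the traceless selfadjoint operators $\Pi_j-(1/n)\mathbb I_2$ have Bloch vectors proportional to $\vec s_j$, so they all sit inside this plane and therefore span at most a two-dimensional subspace of the three-dimensional real space of traceless selfadjoint operators --- in particular they miss the polar direction. By the characterization of informational completeness recalled in the proof of Remark \ref{fact} (\cite[Prop.\ 3.51]{HeiZim}), this is precisely the failure of informational completeness. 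Applying the converse, which is valid in dimension two, I conclude $\min_{\rho\in\mathcal P(\mathbb C^2)}\widetilde H(\rho,\Pi)=0$, equivalently $\max_{\rho\in\mathcal P(\mathbb C^2)}H(\rho,\Pi)=\ln n$.

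Next I would identify the minimizers explicitly. For a state $\rho$ with Bloch vector $\vec r$ the standard qubit overlap $\tr(\rho\,\rho_j)=(1+\vec r\cdot\vec s_j)/2$ gives $p_j(\rho,\Pi)=\tr(\rho\Pi_j)=(1+\vec r\cdot\vec s_j)/n$, so $\widetilde H(\rho,\Pi)=0$ holds if and only if $p_j=1/n$ for every $j$, i.e.\ if and only if $\vec r\cdot\vec s_j=0$ for all $j$. Since for $n>2$ the vectors $\vec s_j$ span the whole equatorial plane, this forces $\vec r$ to lie along the polar axis, so $\rho$ must be one of the two poles; conversely, at a pole $\vec r\cdot\vec s_j=0$ trivially, whence both poles are minimizers.

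Since everything reduces to Remark \ref{fact} plus elementary Bloch-sphere geometry, I do not expect a genuine obstacle. The only point demanding a little care is that the statement pins down \emph{where} the extremum is attained, not just its value: rather than stopping at the existence of a minimizer provided by the converse, I should run the explicit equivalence above, which simultaneously yields the value $0$ (resp.\ $\ln n$) and shows that the poles are the \emph{only} maximizers.
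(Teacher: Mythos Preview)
Your proposal is correct and follows exactly the route the paper intends: the corollary is stated there as a ``straightforward consequence of Remark~\ref{fact}'', and you have simply unpacked that consequence---non-informational-completeness of the equatorial $n$-gon via the Bloch picture, then the converse in $d=2$---together with the elementary computation showing the poles are precisely the states giving the uniform distribution. There is nothing to add or correct.
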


\section{States of maximum uncertainty with respect to HS-POVMs in dimension 2 and SIC-POVMs}

Finding extremal points of entropy-like functions is in general not an easy task, since the standard procedures, like the Lagrange multipliers method, fail. The main idea of the proofs presented below is to replace the entropy function $\eta$ with a polynomial in such a way that the obtained modified function agrees with $H(\cdot,\Pi)$ exactly at the global maximizers and is greater otherwise. To achieve this aim we use the method based on the Hermite interpolation described in detail in \cite{SloSzy16}. We give a brief background on the Hermite interpolation in  Appendix \ref{hermite}.

All the results presented below refer to rank-1 normalized POVMs and so we find it more convenient to consider the entropy of a measurement as a function of its Bloch vectors. We use the normalized version of the Bloch representation, i.e., we assume that the generalized Bloch set $B(d)$ is a subset of $(d^2-2)$-dimensional real \emph{unit} sphere (in particular, $B(2)$ is just the unit sphere $S^2$). In this setup the relation between the Hilbert-Schmidt inner product of states $\rho$ and $\sigma$ and the standard inner product of their Bloch images $u$ and $v$ is as follows: $\langle\langle\rho,\sigma\rangle\rangle_{HS}=\tr(\rho\sigma)=((d-1)u\cdot v+1)/d$. Let us denote by $v_1,\ldots,v_k$ the Bloch representations of states $\rho_1,\ldots,\rho_k$ associated with the POVM $\Pi=\{\Pi_j\}_{j=1}^k$ and by $u$ the Bloch representation of some state $\rho$. Then $p_j(\rho,\Pi)=((d-1)u\cdot v_j)+1)/k$ and
\begin{equation}\label{entropyBloch}
H_{\Pi}(u):=H(\rho,\Pi)=\sum_{j=1}^k\eta\left(\frac{(d-1)u\cdot v_j+1}{k}\right)=\ln\frac{k}{d}+\frac{d}{k}\sum_{j=1}^k h(u\cdot v_j),
\end{equation}
where $h:[-1/(d-1),1]\to\mathbb R^+$ is defined by $h(t):=\eta(((d-1)t+1)/d)$.

Firstly, we prove a simple result that the minimum relative entropy (and so maximum entropy) of a SIC-POVM is always attained at the states constituting this SIC-POVM: 

\begin{Th}\label{maxsic}
	Let  $\Pi=\{\Pi_j\}_{j=1}^{d^2}$ be a SIC-POVM in dimension~$d$. Then the states $\rho_j:=d\Pi_j$ for $j=1,\ldots,d^2$ are the only minimizers of the relative entropy of $\Pi$ restricted to the pure states and
	\begin{equation}\label{sicbound}
	\min_{\rho\in\mathcal P(\mathbb C^d)}\widetilde H(\rho,\Pi)=\widetilde H(\rho_j,\Pi)=\ln d-\frac{d-1}{d}\ln(d+1),
	\end{equation}
	for $j=1,\ldots,d^2$. Moreover,  $\min_{\rho\in\mathcal P(\mathbb C^d)}\widetilde H(\rho,\Pi)\xrightarrow{d\to\infty} 0$.
\end{Th}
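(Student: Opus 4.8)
The plan is to work entirely in the Bloch picture through \eqref{entropyBloch} and to bound $H_\Pi(u)$ from above by a quantity that is constant over all pure states and coincides with $H_\Pi(v_i)$. Since maximizing $H$ is equivalent to minimizing $\widetilde H=\ln(d^2)-H$, it suffices to prove
\begin{equation*}
\sum_{j=1}^{d^2} h(u\cdot v_j)\le (d-1)\ln(d+1)\qquad\text{for every pure state with Bloch vector }u,
\end{equation*}
with equality precisely when $u=v_i$ for some $i$; this turns \eqref{entropyBloch} into $H_\Pi(u)\le \ln d+\frac{d-1}{d}\ln(d+1)$ and hence into \eqref{sicbound}.

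First I would extract the two structural inputs supplied by the SIC condition. From $\tr(\Pi_i\Pi_j)=1/(d^2(d+1))$ and the dictionary $\tr(\rho_i\rho_j)=((d-1)\,v_i\cdot v_j+1)/d$ one gets $v_i\cdot v_j=-1/(d^2-1)$ for $i\neq j$. More importantly, a SIC-POVM is a complex projective $2$-design, so its Bloch vectors form a tight frame in the $(d^2-1)$-dimensional Bloch space: $\sum_j v_j=0$ and $\sum_j v_j v_j^{\top}=\frac{d^2}{d^2-1}I$. Consequently, for \emph{every} unit vector $u$,
\begin{equation*}
\sum_{j=1}^{d^2} (u\cdot v_j)=0,\qquad \sum_{j=1}^{d^2}(u\cdot v_j)^2=\frac{d^2}{d^2-1},
\end{equation*}
both independent of $u$. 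This is exactly what makes the SIC case tractable: only the zeroth, first and second moments of the numbers $u\cdot v_j$ can appear, and all three are constants.

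The core of the argument is to produce a single quadratic $P(t)=at^2+bt+c$ that dominates $h$ on the whole domain $[-1/(d-1),1]$ and touches it at precisely the two inner-product values realized at a maximizer, namely $t_2=1$ (where $h(1)=0$) and $t_1=-1/(d^2-1)$ (where $h(t_1)=\eta(1/(d+1))$). I would pin $P$ down by the Hermite conditions $P(t_1)=h(t_1)$, $P'(t_1)=h'(t_1)$ and $P(t_2)=h(t_2)$, i.e.\ a double node at the interior point $t_1$ and a simple node at the right endpoint $t_2$. Granting $P\ge h$ on $[-1/(d-1),1]$, the moment identities give
\begin{equation*}
\sum_j h(u\cdot v_j)\le \sum_j P(u\cdot v_j)=a\,\frac{d^2}{d^2-1}+b\cdot 0+c\,d^2,
\end{equation*}
a constant, and evaluating the right-hand side at $u=v_i$ (where every $u\cdot v_j$ lands on a touching point, so $P$ agrees with $h$) identifies it with $\sum_j h(v_i\cdot v_j)=(d-1)\ln(d+1)$, completing the bound.

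The step that looks hardest, namely the verification $P\ge h$, is in fact handled cleanly by the Hermite remainder formula, which is the method of \cite{SloSzy16}: since $h$ is $\eta$ composed with an increasing affine map, all its derivatives have constant sign, and in particular $h'''>0$ on the interval. With node polynomial $\omega(t)=(t-t_1)^2(t-t_2)=(t-t_1)^2(t-1)\le 0$ for $t\le 1$, the error is $h(t)-P(t)=\tfrac{1}{6}h'''(\xi)\,\omega(t)\le 0$, giving $P\ge h$ with equality only at $t_1$ and $t_2$; this is precisely where placing the double node at the interior point (rather than at the endpoint) is essential, since only then is $\omega$ sign-definite on the whole interval. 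The equality case then follows at once: $\sum_j h(u\cdot v_j)=\sum_j P(u\cdot v_j)$ forces every $u\cdot v_j\in\{t_1,t_2\}$, and since $\sum_j u\cdot v_j=0$ rules out all of them equalling $t_1=-1/(d^2-1)$, some $u\cdot v_i=1$, whence $u=v_i$; distinctness of the $v_i$ gives uniqueness. Finally the limit is immediate from
\begin{equation*}
\ln d-\frac{d-1}{d}\ln(d+1)=\ln\frac{d}{d+1}+\frac{\ln(d+1)}{d}\xrightarrow{\ d\to\infty\ }0.
\end{equation*}
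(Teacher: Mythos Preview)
Your proof is correct and follows essentially the same route as the paper: the same Bloch parametrization via \eqref{entropyBloch}, the same Hermite interpolant with a double node at $t_1=-1/(d^2-1)$ and a simple node at $t_2=1$ (the paper packages the sign analysis of the remainder as its Lemma~\ref{fromabove}), the same use of the $2$-design property to render $\sum_j P(u\cdot v_j)$ constant, and the same uniqueness argument from $\{u\cdot v_j\}\subset\{t_1,t_2\}$ together with $\sum_j u\cdot v_j=0$. Your explicit remark that the double node must sit at the interior point so that $\omega(t)=(t-t_1)^2(t-1)$ is sign-definite on $[-1/(d-1),1]$ is exactly the content of the hypothesis ``$t_0>a$, $k_m=1$ when $t_m=b$'' in the paper's Lemma~\ref{fromabove}.
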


\begin{proof}
	From the definition of SIC-POVM we obtain $p_i(\rho_j,\Pi)=\tr((1/d)\rho_j\rho_i)= 1/(d(d+1))$ for $i\neq j$ and $p_j(\rho_j,\Pi)=1/d$. Thus, we get the second equality in (\ref{sicbound}). To see that it is indeed the minimum value of the relative entropy on $\mathcal P(\mathbb C^d)$, let us use the Hermite interpolation method. We consider the entropy of $\Pi$ redefined in (\ref{entropyBloch}) 
	to be a function of Bloch vectors, putting $k=d^2$. Note that $v_j\cdot v_i=-1/(d^2-1)$ and $v_j\cdot v_j=1$. 
	According to Lemma \ref{fromabove} (see Appendix \ref{hermite}), the interpolating Hermite polynomial $p$ such that $p(1)=h(1)$, $p(-1/(d^2-1))= h(-1/(d^2-1))$ and $p'(-1/(d^2-1))=h'(-1/(d^2-1))$   interpolates $h$ from above.
	Thus, the polynomial function given by $P(u):=\ln d+\frac{1}{d}\sum_{j=1}^{d^2}p(u\cdot v_j)$ for $u\in B(d)\subset\mathbb R^{d^2-1}$ is of degree less than 3. Since every SIC-POVM is a projective 2-design \cite{Renetal04}, $\sum_{j=1}^{d^2}(u\cdot v_j)^t=const$ for $t=1,2$, and so $P$ is necessarily constant on the whole Bloch set. Using the fact that $P(u)\geq H_{\Pi}(u)$ for $u\in B(d)$, and $P(v_j)=H_\Pi(v_j)$ for $j=1,\ldots,d^2$, we conclude that the entropy attains its maximum (and so the relative entropy its minimum) on the set of pure states  at $\rho_j$ ($j=1,\ldots,d^2$).
	
	To see that there are no other maximizers, let us observe that 
	if a pure state $\rho$ is also a global maximizer of the entropy, then $\{u\cdot v_j|j=1,\ldots,d^2\}$ has to be contained in the set of interpolating points $T:=\{-1/(d^2-1),1\}$, where $u$ is a normalized Bloch vector corresponding to $\rho$. Using the fact that $\sum_{j=1}^{d^2}u\cdot v_j=0$, we get $\{u\cdot v_j|j=1,\ldots,d^2\}= T$, and so $u\in\{v_1,\ldots,v_{d^2}\}$. Thus, the uniqueness is proven. The limit as $d\to\infty$ follows from direct calculation.
\end{proof}

\begin{R}\label{genentr} 
	Similar statement, concerning the arrangement of the maximizers of the entropy holds true if we replace the Shannon entropy with the Havrda-Charv\'at-Tsallis $\alpha$-entropy $H_\alpha$ or the R\'enyi $\alpha$-entropy $R_\alpha$ for $\alpha\in (0,2)$ (note that for $\alpha=1$ we get the Shannon entropy). The reason is that the Hermite polynomial of the function $\theta_\alpha(x):=(x-x^\alpha)/(\alpha-1)$ for $\alpha\in(0,2)\setminus\{1\}$ used in the definition of $H_\alpha$ is also interpolating from above (see Lemma \ref{fromabove}) and since its degree enforces the resulting polynomial on the Bloch set to be constant, that completes the proof. Furthermore, nothing changes if we take a strictly increasing function of $H_\alpha$, which is the case if we consider $R_\alpha$ for $\alpha\in (0,2)$.
\end{R}

\begin{R}
	It is possible to complete the proof of Theorem \ref{maxsic} in another way, using the fact that the sum of squared probabilities of the measurement outcomes is the same for each initial pure state and equal to $2/(d(d+1))$. Theorem 2.5. from \cite{HarTop01} provides us with both minimizers and maximizers of the Shannon entropy of probability distribution under assumption that the \emph{index of coincidence}, i.e., the sum of squared probabilities is fixed. One might think that this is not particularly useful, since the set of  probability distributions \emph{allowed} by the SIC-POVM for initial pure states form just a $(2d-2)$-dimensional subset of a $(d^2-2)$-dimensional intersection of a $(d^2-1)$-sphere and the simplex $\Delta_{d^2}$. As $2d<d^2$ for $d>2$, it seems highly unlikely that the extremal points described in \cite{HarTop01}  belong to it. Thus it is quite surprising to find out that the probability distributions generated by the states indicated in Theorem \ref{maxsic} coincide with the ones optimal for general case.
\end{R}

Our next theorem provides the global maximizers (minimizers) of the entropy (relative entropy) of informationally complete HS-POVMs in dimension two. All of them have been already indicated as local minimizers in \cite[Proposition 8]{SloSzy16}. However, not all local minimizers found there turn out to be the global ones, as we see in the cuboctahedral and icosidodecahedral case.

\begin{Th}\label{HSThm}
	Let $\Pi$ be an informationally complete HS-POVM in dimension two, but not a SIC-POVM.  Then the entropy (resp.\ relative entropy) of $\Pi$  restricted to the set of pure states attains its maximum (resp.\ minimum) value exactly in the states whose Bloch vectors correspond to 
	\begin{enumerate}[1\emph{)}]
		\item the vertices of the dual polyhedron, if $\Pi$ is represented by a platonic solid,\label{platonic}
		\item the vertices of the octahedron, if $\Pi$ is represented by cuboctahedron,
		\item the vertices of the icosahedron, if $\Pi$ is represented by icosidodecahedron.
	\end{enumerate}
\end{Th}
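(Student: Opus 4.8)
The plan is to follow the same Hermite interpolation strategy used in Theorem~\ref{maxsic}, but now adapted to the fact that these HS-POVMs are \emph{not} 2-designs, so the resulting polynomial $P(u)$ will no longer be forced to be constant on the Bloch sphere. Concretely, I would fix one of the three POVMs $\Pi$ (say the cuboctahedron), identify the distinct values of the inner products $v_i\cdot v_j$ that occur among its Bloch vectors, and compute for each candidate maximizer $\sigma$ (a vertex of the proposed dual configuration) the multiset $\{u\cdot v_j\}_{j=1}^k$, where $u$ is the Bloch vector of $\sigma$. This multiset, by the high symmetry of the configuration, takes only a few distinct values, and these values become the nodes of the Hermite interpolation. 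I would then build the Hermite polynomial $p$ agreeing with $h$ (value and derivative) at these nodes in such a way that Lemma~\ref{fromabove} guarantees $p\ge h$, and set $P(u):=\ln(k/d)+(d/k)\sum_{j=1}^k p(u\cdot v_j)$ with $d=2$.

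First I would verify the two defining inequalities of the method: that $P(u)\ge H_\Pi(u)$ everywhere on $S^2$ (immediate from $p\ge h$), and that $P(\sigma)=H_\Pi(\sigma)$ at each candidate maximizer (which holds precisely because every node of the interpolation is hit, so $p$ and $h$ agree termwise there). The crucial new step is to show that $P$ attains its \emph{global maximum} over the Bloch sphere exactly at the candidate points and nowhere else. Since $P$ is a genuine polynomial of low degree in $u=(x,y,z)$ rather than a constant, this reduces to a finite, concrete optimization: expand $\sum_j p(u\cdot v_j)$ using the power sums $\sum_j (u\cdot v_j)^t$ for $t$ up to $\deg p$, exploit the symmetry of the vertex set to kill odd and low-order terms, and check that the remaining polynomial on $S^2$ is maximized at the claimed vertices. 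Here the invariance of the configuration under its symmetry group is what makes $\sum_j(u\cdot v_j)^t$ a symmetric (hence tractable) polynomial in the coordinates of $u$.

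I would treat the three cases in turn. For the platonic solids (item~\ref{platonic}) the vertex set is a single orbit under a large symmetry group, the inner-product spectrum is small, and the dual polyhedron's vertices are forced by symmetry to be the optimizers of the reduced polynomial; this is the most uniform case. The cuboctahedral and icosidodecahedral cases are subtler precisely because, as the paragraph preceding the theorem notes, some local minimizers of $\widetilde H$ found in \cite{SloSzy16} are \emph{not} global, so I must confirm that $P$ separates the true global maximizers (octahedron vertices, resp.\ icosahedron vertices) from the spurious local ones. For these I would rely on the explicit coordinates of the cuboctahedron/icosidodecahedron vertices, reduce $P$ to a polynomial in the single variable $u\cdot w$ along symmetry axes (or a pair of invariants), and compare the values at competing critical directions numerically-exactly.

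The main obstacle I anticipate is the global-maximum verification for the two Archimedean cases: unlike the SIC situation, $P$ is nonconstant, so ruling out that some unexpected direction on $S^2$ gives a larger value of $P$ requires genuinely analyzing the critical points of a degree-$\deg p$ polynomial on the sphere. The symmetry reduction should cut this down to a one- or two-variable problem, but one must be careful that the interpolation nodes are chosen so that $p\ge h$ holds on the \emph{entire} interval $[-1,1]$ (the hypothesis of Lemma~\ref{fromabove}), and simultaneously so that $P$'s unique maxima land on the dual configuration; balancing these two requirements is the delicate point. Uniqueness of the maximizers then follows, exactly as in Theorem~\ref{maxsic}, by observing that any global maximizer $\rho$ must have all of its inner products $u\cdot v_j$ lying in the node set of the interpolation, and that—together with the first and second power-sum constraints—this pins $u$ down to the listed vertices.
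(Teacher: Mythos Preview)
Your overall framework matches the paper's: Hermite-interpolate $h$ from above at the node set $T_v=\{w\cdot u:u\in Gv\}$, form $P(u)\ge H_\Pi(u)$ with equality on the candidate orbit $Gw$, then show $P$ itself is globally maximized there; your uniqueness argument via the first and second power-sum constraints is exactly the paper's.

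Two things need correction. First, your premise that ``these HS-POVMs are not 2-designs'' is false: every one of them is a projective 2-design (equivalently a spherical $t$-design on $S^2$ with $t\ge 3$). The reason $P$ may fail to be constant is not weaker design strength but that $p_v$ has higher degree because $|T_v|>2$. In particular, for the cube and the octahedron one has $|T_v|=2$, hence $\deg p_v\le 3$, and $P$ \emph{is} constant on the sphere --- that pair is exactly as easy as the SIC case, contrary to what your outline suggests.

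Second, and this is the real gap, your plan for the global-maximum step --- expand power sums, restrict to symmetry axes, compare critical values ``numerically-exactly'' --- is precisely what you yourself flag as the ``main obstacle'', and you give no mechanism to close it. The paper supplies the missing structural idea: since $P_v$ is invariant under the full point group $G\in\{O_h,I_h\}$, it lies in the ring generated by the primary invariants ($I_2,I_4,I_6$ for $O_h$; $I_2,I_6',I_{10}$ for $I_h$). The degree bound $\deg p_v<2|T_v|$, combined with the degrees $(2,4,6)$ resp.\ $(2,6,10)$ of these generators and the restriction $I_2=1$ on $S^2$, forces $P_v|_{S^2}$ to be either constant (cube, octahedron) or of the form $A+BI_4$ (cuboctahedron) or $A+BI_6'$ (icosahedron, dodecahedron, icosidodecahedron). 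The global extrema of $I_4$ and $I_6'$ on $S^2$ are classically known (octahedron/cube and dodecahedron/icosahedron, respectively), so the entire problem collapses to evaluating $P_v$ at two points to read off the sign of the single coefficient $B$. Without this invariant-theoretic reduction, the critical-point analysis you propose for a degree-$7$ or degree-$9$ polynomial on $S^2$ remains unexecuted.
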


Before we proceed to the proof, let us recall some basic facts concerning the rings of $G$-invariant polynomials. 
Since  only HS-POVMs in dimension 2 are considered here, we focus on the rings of $G$-invariant polynomials for the following groups: $T_d,O_h$ and $I_h$. For more general overview of the topic see, e.g., \cite{DerKem02,Jaretal84}. 
All the groups mentioned above are coregular, which means that there exist algebraically independent homogeneous $G$-invariant polynomials $\theta_1,\theta_2$ and $\theta_3$ such that every $G$-invariant polynomial can be represented in the form $Q(\theta_1,\theta_2,\theta_3)$ for some $Q\in\mathbb R[x_1,x_2,x_3]$. The polynomials $\theta_1,\theta_2$ and $\theta_3$ are called \emph{primary invariants}. 

Let us consider the canonical representations of groups $T_d, O_h$ and $I_h$, i.e., such that the vertices of the tetrahedron at $(1,1,1)$, $(1,-1,-1)$, $(-1,1,-1)$ and $(-1,-1,1)$ define the 3-fold axes for $T_d$, the vertices of the icosahedron at $(\pm\tau,\pm 1,0)$, $(0,\pm\tau,\pm 1)$ and $(\pm 1,0,\pm\tau)$ define the 5-fold axes for $I_h$, and the 4-fold axes for $O_h$ are given by the $x$, $y$ and $z$ axes. By $\tau$ we denote the \emph{golden ratio}, $\tau:=(1+\sqrt{5})/2$.  Let $I_{2}:=x^{2}+y^{2}+z^{2}$, $I_{3}:=xyz$,
$I_{4}:=x^{4}+y^{4}+z^{4}$, $I_{6}:=x^{6}+y^{6}+z^{6}$, $I_{6}^{\prime}
:=(\tau^{2}x^{2}-y^{2})(\tau^{2}y^{2}-z^{2})(\tau^{2}z^{2}-x^{2})$ and
$I_{10}:=(x+y+z)(x-y-z)(y-z-x) (z-y-x)(\tau^{-2}x^{2}-\tau^{2}y^{2})(\tau
^{-2}y^{2}-\tau^{2}z^{2})(\tau^{-2}z^{2}-\tau^{2}x^{2})$. Then $I_2,I_3$ and $I_4$ are primary invariants for $T_d$, $I_2, I_4$ and $I_6$ -- primary invariants for $O_h$, and $I_2, I_6'$ and $I_{10}$ -- primary invariants for $I_h$ \cite{Jaretal84}. 

We shall frequently use in the proof the facts gathered in Table \ref{priminv}:
\begin{table}[ht]\label{priminv}
	\begin{small}
		\centering
		\begin{tabular}
			[c]{c|c|c}
			primary invariant & MIN & MAX\\\hline \hline
			$I_{4}$, $I_6$ & cube & octahedron \\\hline
			$I_{6}^{\prime}$ & icosahedron & dodecahedron \\\hline
			$I_{10}$ &non-Archimedean vertex
			truncated icosahedron & dodecahedron 
		\end{tabular}
		
		\caption{Global minimizers and maximizers of the primary invariants for $O_h$ and $I_h$ restricted to the unit sphere}\label{priminv}
	\end{small}
\end{table}

\begin{proof}[Proof of Theorem \ref{HSThm}]
	We proceed in a similar way as before, applying the Hermite interpolation method to the function $h:[-1,1]\to\mathbb R^+$ defined by (\ref{entropyBloch}). The set of interpolating points is given by $T_v:=\{w\cdot u|u\in Gv\}$, where $v$ is the Bloch vector of one of the states constituting the POVM, $w$ is the Bloch vector indicated in the theorem's statement, $G=O_h$ for cube, octahedron and cuboctahedron and $G=I_h$ for icosahedron, dodecahedron and icosidodecahedron. The number of elements of the POVM can be now expressed as $k_v:=|G|/|G_v|$, where $G_v$ stands for the stabilizer of $v$ under the action of $G$. Note that $T_v\subset(-1,1)$; thus; after choosing the interpolating polynomial $p_v$ to agree with $h$ in every $t\in T_v$ up to the first derivative, we get  $\deg p_v<2|T_v|$ and $p_v$ interpolates $h$ from above, see Lemma \ref{fromabove} in Appendix \ref{hermite}. 
	The polynomial $P_v$ given by $P_v(u):=\ln(|G|/(2 |G_v|))+((2 |G_v|)/|G|)\sum_{[g]\in G/G_v}p_v(u\cdot gv)$ for $u\in B(2)$ is $G$-invariant and thus can be expressed in terms of primary invariants. Note that $P_v(u)\geq H_{\Pi}(u)$ for all $u\in B(2)$ and equality holds for all vectors from the orbit $Gv$. Thus, in order to see that the orbit of $w$ maximizes $H_{\Pi}$, it suffices to show that $P_v$ attains its maximum value in the orbit of $w$.
	
	Notice also that in Case \ref{platonic} we get the same polynomials $p_{v_1}$ and $p_{v_2}$  for dual POVMs (as the sets of interpolating points $T_{v_1}$ and $T_{v_2}$ coincide), but $P_{v_1}$ and $P_{v_2}$ differ.

	{\bf Cube} and {\bf octahedron.} This case is straightforward, as $|T|=2$ implies that the degree of the interpolating polynomials $p_{v_1}$ and $p_{v_2}$ is at most 3, and, as they are polynomial functions of $I_2$, $I_4$ and $I_6$, this degree is actually smaller than 2. Hence both $P_{v_1}$ and $P_{v_2}$ need to be constant on the sphere $B(2)$.
	
	{\bf Icosahedron} and {\bf dodecahedron.} In these cases we have $|T|=4$ and $\deg p_{v_1}=\deg p_{v_2}\leq 7$. In consequence, $P_{v_1}|_{^{S^{2}}}=A_1  +B_1I_6'$ and $P_{v_2}|_{^{S^{2}}}=A_2 +B_2 I_6'$ for some $A_1, A_2,B_1,B_2\in\mathbb R$. Thus, knowing that $I_6'$ restricted to the unit sphere attains its global maxima at the orbit of $v_2$, i.e., the vertices of dodecahedron, and global minima at the orbit of $v_1$, i.e., the vertices of icosahedron, see Table \ref{priminv},  it suffices to show that $B_1>0$ and $B_2<0$. We find their values  by calculating $P_v$ at two chosen points from different orbits and check that they are indeed of the desired sign.
	
	{\bf Cuboctahedron.} For the cuboctahedral measurement we get $|T|=3$ and $\deg P_v\leq\deg p_v\leq 5$. Thus, $P_v|_{^{S^{2}}}=A+B I_4$ for some $A,B\in\mathbb R$. Proceeding as in the previous case, it is enough to find the value of $B$ and check that it is positive.
	
	{\bf Icosidodecahedron.} In this case we get $|T|=5$ and so $\deg P_v\leq\deg p_v\leq 9$. In consequence, $P_{v}|_{^{S^{2}}}=A +BI_6'$ for some $A,B\in \mathbb R$. Again, we calculate $B$ using the same method as before and check that it is negative.
	
	To see the uniqueness of given global maximizers of the entropy, let us observe that if a pure state $\tilde{\rho}$ is also a global maximizer of the entropy, then $\widetilde{T_v}:=\{\tilde{w}\cdot u|u\in Gv\}\subset T_v$, where $\tilde{w}$ is a normalized Bloch vector corresponding to $\tilde{\rho}$. Using the fact that $\sum_{u\in Gv}\tilde{w}\cdot u=0$ and $\sum_{u\in Gv}(\tilde{w}\cdot u)^2=|Gv|/3$ (as $Gv$ is a 2-design) we get not only $\widetilde{T_v}= T_v$, but also we deduce that this equality need to hold even if we treat $\widetilde{T_v}$ and $T_v$ as multisets. It is easy to see now that there are no other vectors in $S^2$ with this property. 
\end{proof}

\begin{R}
	Note that the result for the cubical and octahedral measurements can be straightforwardly generalized to the case of the Havrda-Charv\'at-Tsallis $\alpha$-entropy or the R\'enyi $\alpha$-entropy with $\alpha\in(0,2)$ since the same argument as the one used in Remark \ref{genentr} holds here as well. 
\end{R}

\section{MAX, MIN and average relative entropy}
Let us recall that the relative entropy is strongly related to the  informational power of POVM, denoted by $W(\Pi)$ and defined as the maximum of mutual information between an ensemble of quantum states $\mathcal E$ and the POVM $\Pi$ taken over all possible ensembles \cite{DAretal11,Oreetal11}:
\begin{equation}\label{infpower}
W(\Pi)=\max_\mathcal E I(\mathcal E,\Pi)=\sum_{i=1}^m\eta(\sum_{j=1}^kP_{ij})+\sum_{j=1}^k\eta(\sum_{i=1}^m P_{ij})+\sum_{i=1}^m\sum_{j=1}^k \eta(P_{ij}),
\end{equation}
where $\mathcal E=\{\pi_i,\sigma_i\}_{i=1}^m$, $\sum_{i=1}^m \pi_i=1$, $\pi_i\geq 0$ for $i=1,\ldots,m$, $\sigma_i\in\mathcal S(\mathbb C^d)$ for $i=1,\ldots,m$ and $P_{ij}=\pi_i\tr(\sigma_i\Pi_j)$. 
Firstly, the maximum relative entropy  provides an upper bound on the informational power:
$$W(\Pi)\leq \max_{\rho\in\mathcal S(\mathbb C^d)}\widetilde{H}(\rho,\Pi),$$ which is saturated whenever there exists an ensemble $\mathcal E$ consisting of maximizers of the relative entropy that fulfils the condition $\tr(\sum_{i=1}^m(\pi_i\rho_i)\Pi_j)=1/k$ for $j=1,\ldots,k$, see \cite[p.\ 578]{SloSzy16}. Secondly, and more importantly in the context of this paper,  the lower bound for $W(\Pi)$ derived in \cite{DAr14} in the case of rank-1 POVMs is equal to the average relative entropy of $\Pi$ over all pure states, which can be easily calculated using Jones' formula \cite{Jon91,SloSzy16}: $$\langle \widetilde{H}(\rho,\Pi)\rangle _{\rho\in\mathcal{P}(
	\mathbb{C}^{d})  }    =\ln d-\sum_{j=2}^{d}\frac{1}{j}\leq W(\Pi).$$ This average tends to $1-\gamma$ when $d\rightarrow\infty$, where $\gamma\approx0.57722$ denotes the Euler-Mascheroni constant.

Note that  in eq.\ (\ref{infpower}) the maximum can be taken over ensembles consisting of pure states only and, what is more, if a POVM is group-covariant, then there exists also a group-covariant maximally informative ensemble \cite{Oreetal11}. Moreover, in such case the mutual information between a group-covariant ensemble and the POVM is equal to the relative entropy of POVM evaluated in any state from the ensemble.  Taking this all into account, we conclude that for any group-covariant SIC-POVM $\Pi=\{\Pi_j\}_{j=1}^{d^2}$ (note that all known SIC-POVMs are group-covariant) the so called `pretty-good' ensemble $\mathcal E=\{1/d^2,d\Pi_j\}_{j=1}^{d^2}$, already proven to be suboptimal \cite{DAretal14}, turns out to be `pretty-bad', being in fact the worst possible choice from the set of ensembles among which we are sure to find an optimal one, since it consists of the minimizers of the relative entropy, as  shown in Theorem \ref{maxsic}.   

The results derived in this paper combined with the previous ones concerning the maximizers of the relative entropy\cite{DAretal11,Oreetal11,SloSzy16,Szy14,SzySlo16} give us
a deeper insight into the behaviour of the relative entropy. In particular, we can now complete Table 5 from  \cite{SloSzy16}  to obtain  detailed information on the relative entropy of all highly symmetric POVMs in dimension 2, see Table \ref{hstable}. For better visualization see Fig.\ 3 in \cite{SloSzy16}. 

\begin{table}[ht]
	\begin{small}
		\centering
		\begin{tabular}
			[c]{c|c|c|c|c}
			POVM configuration  & minimal configuration & minimum & maximal configuration & maximum\\
			\hline\hline
			digon  & equator & $0$ & digon & $0.69315$\\
			regular $n$-gon ($n\rightarrow\infty$)  & digon & $0$ & dual $n$-gon & $0.30685$\\
			tetrahedron  & tetrahedron & $0.14384$ & `twin' tetrahedron & $0.28768$\\
			octahedron  & cube & $0.17744$ & octahedron & $0.23105$\\
			cube  & octahedron & $0.17744$& cube & $0.21576$\\
			cuboctahedron  & octahedron & $0.18443$ & cuboctahedron & $0.20273$\\
			icosahedron & dodecahedron & $0.18997$ & icosahedron & $0.20189$ \\
			dodecahedron  & icosahedron & $0.18997$ & dodecahedron & $0.19686$\\
			icosidodecahedron  & icosahedron & $0.19099$ & icosidodecahedron & $0.19486$\\\hline\hline
			average relative entropy & \multicolumn{4}{c}{$0.19315$}
		\end{tabular}
		
		
		\caption{The approximate values of informational power (maximum relative entropy) and minimum relative entropy on pure states (up to five digits)  for all types of HS-POVMs in dimension two together with both extremal configurations on the Bloch sphere.}\label{hstable}
	\end{small}
\end{table}

Now let us take a closer look at SIC-POVMs (Table \ref{sictable}). Although Theorem \ref{maxsic} gives us the minimum value of relative entropy and the description of minimal configurations for an arbitrary dimension $d$, there are only four cases for which the maximal configurations are known as well, and all these cases are in some sense exceptional among SIC-POVMs: 2-dimensional, Hesse's and Hoggar's SIC-POVMs are the only `supersymmetric' SIC-POVMs \cite{Zhu15} and the generic 3-dimensional SIC-POVMs form the only known infinite family of nonequivalent SIC-POVMs. Moreover, there is a numerical evidence \cite{DAr14} that in other cases the upper bound for relative entropy which can be derived from Theorem 2.5 in \cite{HarTop01}, see also Corollary 2 of \cite{DAr15}, is far from being tight. Interestingly, while we have to struggle with finding the maximum relative entropy for an arbitrary SIC-POVM and the description of maximal configurations does not show any visible pattern so far, the minimal configurations can be nicely described for any dimension. 

\begin{table}
	\begin{small}
		\centering
		\begin{tabular}
			[c]{c|c|c|c|c|c}
			\multirow{2}{*}{dimension}  & minimal  & \multirow{2}{*}{minimum} & \multirow{2}{*}{average} & maximal  & \multirow{2}{*}{maximum}\\
			& configuration & & & configuration  & \\
			\hline\hline
			2  & 2-SIC & 0.14384 & 0.19315 & `twin' 2-SIC & 0.28768\\
			3 (generic)   & generic 3-SIC & 0.17442 & 0.26528 & orthonormal basis & 0.40547\\
			3 (Hesse) & Hesse 3-SIC & 0.17442 & 0.26528 & complete 3-MUB & 0.40547 \\
			8 (Hoggar)  & Hoggar 8-SIC & 0.15687 & 0.36158 & `twin' Hoggar 8-SIC & 0.57536\\ 
			\hline
			$d$  &$d$-SIC & $\ln d-\frac{d-1}{d}\ln(d+1)$& $\ln d-\sum_{j=2}^d\frac{1}{j}$ & ? & $\leq\ln\frac{2d}{d+1}$\\ \hline
			\multirow{2}{*}{$d\to\infty$}  & & \multirow{2}{*}{$0$} & $1-\gamma$ & & $\leq\ln 2$ \\
			& & & $\approx 0.42278$ & & $\approx 0.69315$ 
		\end{tabular}
		
		
		\caption{The approximate values of informational power (maximum relative entropy), minimum relative entropy on pure states and its average (up to five digits) together with both extremal configurations for all fully solved cases of SIC-POVMs. The upper bound on relative entropy for an arbitrary $d$ given in the table is probably not achievable in general.}\label{sictable}
	\end{small}
\end{table}
\section{Acknowledgements}
The author is grateful to Wojciech S\l omczy\'nski and Anna Szczepanek for helpful discussions and valuable suggestions
for improving this paper. 

\appendix
\section{Hermite interpolation}\label{hermite}

Let us consider a sequence of pairs $\{(t_i,k_i)\}_{i=0}^m$, where $a\leq t_0<t_1<\ldots<t_m\leq b$ and $k_i$ are positive integers. Let $f:[a,b]\to\mathbb R$ be $D$ times differentiable, where $D:=k_0+k_1+\ldots+k_m$. There exists the only polynomial $p:[a,b]\to\mathbb R$ of degree less than $D$ agreeing with $f$ at points $t_i$ up to the $(k_i-1)$-th derivative, i.e., $p^{(k)}(t_i)=f^{(k)}(t_i)$ for $0\leq k<k_i$ and $i=0,1,\ldots,m$.
\begin{Lem}[Error in the Hermite interpolation]\label{HermiteError} For each $t\in (a,b)$ there exists $\xi\in (a,b)$ such that $\min\{t_0,t\}<\xi<\max\{t_m,t\}$ and
	$$f(t)-p(t)=\frac{f^{(D)}(\xi)}{D!}\prod_{i=0}^m(t-t_i)^{k_i}.$$
\end{Lem}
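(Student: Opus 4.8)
The plan is to reduce everything to a single application of a multiplicity-aware version of Rolle's theorem, following the classical route for the Lagrange remainder but keeping careful track of the orders of vanishing at the nodes. First, observe that if $t$ coincides with one of the nodes $t_i$, then the left-hand side vanishes by the interpolation conditions and the product on the right carries the factor $(t-t_i)^{k_i}=0$, so the identity holds trivially for any admissible $\xi$. Hence I may assume throughout that $t\neq t_i$ for all $i$.

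Next I would introduce the node polynomial $\omega(x):=\prod_{i=0}^m(x-t_i)^{k_i}$, which is monic of degree $D$, together with the auxiliary function
$$g(x):=f(x)-p(x)-C\,\omega(x),\qquad C:=\frac{f(t)-p(t)}{\omega(t)},$$
where $C$ is well defined because $\omega(t)\neq 0$. By construction $g(t)=0$. I would then count the zeros of $g$, with multiplicity, inside the smallest closed interval $J$ containing $t_0,\dots,t_m$ and $t$. At each node $t_i$ the interpolation conditions give $(f-p)^{(\ell)}(t_i)=0$ for $0\le\ell<k_i$, while $\omega$ vanishes there to order exactly $k_i$; hence $g$ has a zero of multiplicity at least $k_i$ at $t_i$. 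Together with the zero at $t$ (distinct from every node), $g$ has at least $\sum_{i=0}^m k_i+1=D+1$ zeros in $J$ counted with multiplicity.

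Now I would invoke the multiplicity version of Rolle's theorem: if a differentiable function has $N$ zeros counted with multiplicity in an interval, then its derivative has at least $N-1$. This follows by combining the ordinary Rolle theorem applied between consecutive distinct zeros (one new zero in each of the $r-1$ gaps, where $r$ is the number of distinct zeros) with the elementary fact that a zero of multiplicity $\mu\ge 1$ of a function is a zero of multiplicity $\mu-1$ of its derivative (accounting for $N-r$ zeros at the old nodes), for a total of $N-1$. Applying this $D$ times to $g$, which is $D$ times differentiable since $f$ is and $p,\omega$ are polynomials, produces a point $\xi$ in the interior of $J$, so that $\min\{t_0,t\}<\xi<\max\{t_m,t\}$, with $g^{(D)}(\xi)=0$.

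Finally I would evaluate $g^{(D)}$ at $\xi$. Because $\deg p<D$ we have $p^{(D)}\equiv 0$, and because $\omega$ is monic of degree $D$ we have $\omega^{(D)}\equiv D!$. Therefore $0=g^{(D)}(\xi)=f^{(D)}(\xi)-C\,D!$, which gives $C=f^{(D)}(\xi)/D!$, and substituting this value into $f(t)-p(t)=C\,\omega(t)$ yields the claimed formula. The only genuinely delicate step is the bookkeeping in the generalized Rolle argument, where each node must be counted with its \emph{full} multiplicity $k_i$ rather than as a simple zero; once that accounting is correct the degree count collapses exactly from $D+1$ zeros of $g$ to one zero of $g^{(D)}$, and everything else is the routine evaluation above.
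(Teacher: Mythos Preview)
Your argument is correct and is precisely the classical auxiliary-function/generalized Rolle proof one finds in standard numerical analysis texts. The paper does not actually supply its own proof of this lemma; it simply refers the reader to Stoer--Bulirsch, where essentially the same argument appears, so your proposal matches the intended route.
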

\begin{proof}
	See, e.g., \cite{StoBul}. 
\end{proof}
Let us now assume that all derivatives of $f$ of even order are strictly negative in $(a,b)$ and these of odd order greater than 1 are strictly positive, i.e.,
\begin{equation}\label{derivatives}
f^{(2l)}(x)<0,\quad f^{(2l+1)}(x)>0 \quad \textnormal{for } l=1,2,\ldots,\ x\in (a,b).
\end{equation}  It is easy to check that the entropy function $\eta$ fulfils these requirements.

\begin{Lem}\label{fromabove}
	Let $t_0>a$, $k_i=2$ for $i=0,\ldots, m-1$ and $k_m=1$ whenever $t_m=b$ or $k_m=2$ otherwise. Then under assumption (\ref{derivatives}) the Hermite polynomial $p$ interpolates $f$ from above, i.e., $p(t)\geq f(t)$ for $t\in[a,b]$. Moreover, $f(t)=p(t)$ if and only if $t=t_i$ for some $i=0,1,\ldots,m$.
\end{Lem}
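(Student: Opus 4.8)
The plan is to use the error formula from Lemma \ref{HermiteError} together with a careful sign analysis of the product term. The key observation is that the interpolation conditions prescribed in the statement — double nodes (multiplicity $k_i=2$) at every interior point and at $t_0$, with the endpoint $t_m=b$ downgraded to a single node — determine the parity with which each factor $(t-t_i)$ enters the product $\prod_{i=0}^m(t-t_i)^{k_i}$, and hence control the sign of $f(t)-p(t)$.

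First I would split into the two cases dictated by the definition of $k_m$. Suppose first that $t_m=b$, so $k_m=1$ and $D=2m+1$. Then the product is $\prod_{i=0}^{m-1}(t-t_i)^2\cdot(t-t_m)$. For $t\in[a,b]$ the squared factors are all $\geq 0$, and since $t\leq b=t_m$ the linear factor $(t-t_m)\leq 0$, so the product is $\leq 0$ on the whole interval. The error formula gives $f(t)-p(t)=f^{(D)}(\xi)/D!\cdot(\text{product})$ with $D=2m+1$ odd, so $f^{(D)}=f^{(2m+1)}>0$ by assumption (\ref{derivatives}) (note $m\geq 1$ so the order exceeds $1$). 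A nonnegative constant times a nonpositive product is $\leq 0$, hence $f(t)\leq p(t)$. In the other case $t_m<b$, so $k_m=2$ and $D=2m+2$ is even; now every factor is squared, making the product $\geq 0$, while $f^{(D)}=f^{(2m+2)}<0$, so again $f(t)-p(t)\leq 0$. Either way $p$ interpolates from above.

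For the strictness claim I would argue that equality $f(t)=p(t)$ forces the product $\prod_{i=0}^m(t-t_i)^{k_i}$ to vanish, because $f^{(D)}(\xi)\neq 0$ strictly by (\ref{derivatives}) (the relevant derivative is either strictly positive or strictly negative, never zero, throughout $(a,b)$). A vanishing product means $t=t_i$ for some $i$. Conversely the interpolation conditions give $p(t_i)=f(t_i)$ at every node, so equality holds exactly on the node set $\{t_0,\dots,t_m\}$. One subtlety is that the error formula as stated in Lemma \ref{HermiteError} is asserted for $t\in(a,b)$, so the endpoints $t=a$ and $t=b$ need separate attention: I would handle them by a continuity/limiting argument, passing $t$ to the boundary in the inequality $f(t)\leq p(t)$ already established on the open interval, and checking that no new equality cases are introduced beyond the nodes (in particular $t=b=t_m$ is itself a node in the first case, and in the second case $b$ is not a node but the limit of the strict inequality gives $f(b)<p(b)$).

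The main obstacle I anticipate is purely bookkeeping rather than conceptual: making sure the parity of $D$ and the sign of the product line up correctly in both cases, and that the hypothesis $m\geq 1$ (so that the derivative order $2m+1$ or $2m+2$ is always at least $3$, placing us squarely in the regime where (\ref{derivatives}) applies and avoiding the excluded exponent $1$) is genuinely available. I would double-check the edge configuration where $t_0=a$ is disallowed by the hypothesis $t_0>a$, which is exactly what guarantees the node at $t_0$ lies in the interior so its squared factor behaves as expected and the error formula's interval constraint on $\xi$ is consistent.
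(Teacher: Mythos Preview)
Your approach is exactly what the paper's one-line proof (``follows straightforwardly from Lemma \ref{HermiteError} and (\ref{derivatives})'') intends, and your parity/sign analysis of $D$ and of $\prod_i(t-t_i)^{k_i}$ in the two cases is correct. The only slip is the claim that passing to the limit in a strict inequality yields a strict inequality at a non-node endpoint; this is easily repaired by noting that the error formula in fact holds at the endpoints as well (the standard Rolle-type argument works on the closed interval under the stated differentiability assumption), so the same sign argument applies directly at $t=a$ and $t=b$.
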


\begin{proof}
	The proof follows straightforwardly from Lemma \ref{HermiteError} and (\ref{derivatives}). 
\end{proof}
Note that this lemma is an analogue of the requirements given in \cite{SloSzy16} to obtain an interpolation from below.

\end{document}